\documentclass{article}
\usepackage{spconf}
\usepackage[utf8]{inputenc}
\usepackage[T1]{fontenc}
\usepackage[english]{babel}
\usepackage[pdftex]{graphicx}
\usepackage{url}
\usepackage{multirow}
\usepackage{amssymb,amsmath,nccmath,amsthm}
\usepackage{caption}
\usepackage{bm}
\usepackage[noend]{algpseudocode}
\usepackage{algpseudocode}
\usepackage{algorithm}
\usepackage{array}
\usepackage[export]{adjustbox}
\usepackage{cite}
\usepackage{siunitx}
\usepackage{multirow}
\usepackage{enumitem}
\usepackage{pgfplots}
\usepackage{setspace}
\usepackage[colorlinks,citecolor=violet, linkcolor=black]{hyperref}
\usepackage{multibib}
\newcites{apndx}{References}

\usepgfplotslibrary{fillbetween}
\pgfplotsset{every tick label/.append style={font=\scriptsize}}
\def\shortestskip{\setlength{\abovedisplayskip}{0pt}%
\setlength{\belowdisplayskip}{0pt}%
\setlength{\abovedisplayshortskip}{0pt}%
\setlength{\belowdisplayshortskip}{0pt}}

\let\oldselectfont\selectfont
\def\selectfont{\oldselectfont\shortestskip}

\usepackage{etoolbox}
\patchcmd{\thebibliography}
  {\settowidth}
  {\setlength{\itemsep}{0pt plus 0.1pt}\settowidth}
  {}{}
\apptocmd{\thebibliography}
  {\small}
  {}{}

\shortestskip

\pgfplotsset{compat=newest,
       colormap={parula}{
            rgb255=(53,42,135)
            rgb255=(15,92,221)
            rgb255=(18,125,216)
            rgb255=(7,156,207)
            rgb255=(21,177,180)
            rgb255=(89,189,140)
            rgb255=(165,190,107)
            rgb255=(225,185,82)
            rgb255=(252,206,46)
            rgb255=(249,251,14)
        },
    }
\usetikzlibrary{plotmarks}
\usetikzlibrary{arrows.meta}
\usetikzlibrary{spy}
\usepackage{booktabs}
\usepackage{soul}
\makeatletter

\newtheorem{proposition}{Proposition}

\def \ef{\eta(f)}
\def \emf{\eta_m(f)}

\DeclareMathOperator*{\argmax}{argmax}
\def\*#1{\mathbf{#1}}

\makeatletter
\newenvironment{tsubarray}[1]{%
  \vcenter\bgroup
  \Let@ \restore@math@cr \default@tag
  \baselineskip\fontdimen10 \scriptfont\tw@
  \advance\baselineskip\fontdimen12 \scriptfont\tw@
  \lineskip\thr@@\fontdimen8 \scriptfont\thr@@
  \lineskiplimit\lineskip
  \check@mathfonts
  \ialign\bgroup\ifx c#1\hfil\fi
    \normalfont\fontsize\sf@size\z@\selectfont\ignorespaces##\unskip\hfil\crcr
}{%
  \crcr\egroup\egroup
}
\makeatother

\newcommand{\di}{\textbf{(i)} }
\newcommand{\ii}{\textbf{(ii)} }
\newcommand{\iii}{\textbf{(iii)} }
\newcommand{\iv}{\textbf{(iv)} }
\newcommand{\dv}{\textbf{(v)} }
\newcommand{\sga}{SG-v1 }

\newcommand{\ga}{G-v1 }

\begin{document}

\title{Scalable Learning-Based Sampling Optimization for\\  Compressive Dynamic MRI} 
\name{Thomas Sanchez\textsuperscript{1}, Baran G\"ozc\"u\textsuperscript{1}, Ruud B. van Heeswijk\textsuperscript{2}, Armin Eftekhari\textsuperscript{3}, \thanks{This work has received funding from the European Research Council (ERC) under the European Union's Horizon 2020 research and innovation program (grant agreement n$^{\circ}$~725594~-~time-data), from Hasler Foundation Program: Cyber Human Systems (project number 16066) and from the Office of Naval Research (ONR)  (grant n$^\circ$ N62909-17-1-2111).}}
\secondlinename{Efe Il{\i}cak\textsuperscript{4}, Tolga \c{C}ukur\textsuperscript{4}, and Volkan Cevher\textsuperscript{1}}

\address{\textsuperscript{1}EPFL, Switzerland
     \textsuperscript{2}CHUV, Switzerland
     \textsuperscript{3}Ume\aa~University, Sweden
    \textsuperscript{4}Bilkent University, Turkey\vspace{-.5cm}}

\maketitle

\begin{abstract}

Compressed sensing applied to magnetic resonance imaging (MRI) allows to reduce the scanning time by enabling images to be reconstructed from highly undersampled data. In this paper, we tackle the problem of designing a sampling mask for an arbitrary reconstruction method and a limited acquisition budget. Namely, we look for an optimal probability distribution from which a mask with a fixed cardinality is drawn. We demonstrate that this problem admits a compactly supported solution, which leads to a deterministic optimal sampling mask.  We then propose a stochastic greedy algorithm that \di provides an approximate solution to this problem, and \ii resolves the scaling issues of \cite{gozcu2017, gozcu2019rethinking}.  We validate its performance on \textit{in vivo} dynamic MRI with retrospective undersampling, showing that our method preserves the performance of \cite{gozcu2017, gozcu2019rethinking} while reducing the computational burden by a factor close to 200. Our implementation is available at
\href{https://github.com/t-sanchez/stochasticGreedyMRI}{https://github.com/t-sanchez/stochasticGreedyMRI}.

\end{abstract}

\begin{keywords}
    Magnetic resonance imaging, compressive sensing (CS), learning-based sampling.
\end{keywords}

\vspace{-4mm}
\section{Introduction}
\vspace{-.3cm}
Dynamic Magnetic Resonance Imaging (dMRI) is a powerful tool in medical imaging, which allows for non-invasive monitoring of tissues over time. A main challenge to the quality of dMRI examinations is the inefficiency of data acquisition that limits temporal and spatial resolutions. In the presence of moving tissues, such as in cardiac MRI, the trade-off between spatial and temporal resolution is further complicated by the need to perform breath-holds to minimize motion artifacts \cite{saeed2015cardiac}.

In the last decade, the rise of Compressed Sensing (CS) has significantly contributed to overcoming these problems.  CS allows for a successful reconstruction from undersampled measurements, provided that they are incoherent \cite{candes2006stable,donoho2006compressed} and that the data can be sparsely represented in some domain. In dMRI, samples are acquired in the $k$-$t$ space (spatial frequency and time domain), and can be sparsely represented in the $x$-$f$ domain (image and temporal Fourier transform domain). 
Many algorithms have exploited this framework with great success (see \cite{lustig2006kt,jung2009k,otazo2010combination,lingala2011accelerated, feng2014golden, caballero2014dictionary, otazo2015low, jin2016general, schlemper2018deep} and the references therein). 

While CS theory mostly focuses on fully random measurements \cite{candes2006robust}, the practical implementations have generally exploited \textit{random variable-density sampling}, based on drawing random samples from a parametric distribution (typically polynomial or Gaussian) which reasonably imitates the energy distribution in the $k$-$t$ space \cite{lustig2007sparse, jung2007improved}. 
 While all these approaches allow to quickly design masks which yield a great improvement over fully random sampling, prescribed by the theory of CS, they \di remain largely heuristic; \ii ignore the anatomy of interest; \iii ignore the reconstruction algorithm; \iv require careful tuning of their various parameters, and \dv do not necessarily use a fixed number of readouts per frame.

In the present work, we show that the problem of finding an optimal mask sampling distribution which contains $n$ out of $p$ possible locations admits a solution compactly supported on $n$ elements. This demonstrates that our previously proposed framework in \cite{gozcu2017, gozcu2019rethinking}, which searches for an approximately optimal sampling mask, is in fact looking for a solution to the more general problem of finding an optimal measurement distribution.
In addition, we propose a \textit{scalable learning-based framework} for dMRI. Our proposed stochastic greedy method preserves the performance of  \cite{gozcu2017, gozcu2019rethinking} while reducing the computational burden by a factor close to $200$.  


Numerical evidence shows that our framework can successfully find sampling patterns for a broad range of decoders, from k-t FOCUSS \cite{jung2009k} to ALOHA \cite{jin2016general}, outperforming state-of-the-art model-based sampling methods over nearly all sampling rates considered. 


\vspace{-4mm}
\section{Theory}\label{sec:background}\shortestskip
\vspace{-.3cm}

\subsection{Signal Acquisition} \label{sec:pbm}
In the compressed sensing (CS) problem \cite{donoho2006compressed}, one desires to retrieve a signal that is known to be sparse in some basis using only a small number of linear measurements. In the case of dynamic MRI, we consider a signal $\*x \in \mathbb{C}^{p} = \mathbb{C}^{N^2 T}$ (i.e. a vectorized video of size $N\times N$ with $T$ frames), and the subsampled Fourier measurements are \useshortskip
\begin{equation}
\*b = \*P_{\Omega}\*{\Psi} \*x + \*w \label{eq:samp}
\end{equation}
where $\*{\Psi} \in \mathbb{C}^{p}$ is the spatial Fourier transform operator applied to the vectorized signal, $\*P_{\Omega} : \mathbb{C}^{p} \to \mathbb{C}^{n}$ is a subsampling operator that  
selects the rows of $\*\Psi$ according to the indices in the set $\Omega$ with $|\Omega| = n$ and $n \ll p$. We refer to $\Omega$ as \textit{sampling pattern} or \textit{mask}. We assume the signal $\*x$ to be sparse in the basis $\*\Phi$, which typically is a temporal Fourier transform across frames. Given the samples $\*b$, along with $\Omega$, a \textit{reconstruction algorithm} or \textit{decoder} $g$ forms an estimate $\*{\hat{x}}  = g(\*b, \Omega) $ of $\*x$.

The quality of the reconstruction is then evaluated using a \textit{performance metric} $\eta(\*x,\bm{\hat{\*x}})$, which could typically include Peak Signal-to-Noise Ratio (PSNR), the negative Mean Square Error (MSE), or the Structural Similarity Index Measure (SSIM) \cite{wang2004image}. 

\vspace{-.5cm}
\subsection{Sampling mask design}
\vspace{-.2cm}

We model the mask designing process as finding a probability mass function (PMF) $f \in S^{p-1}$, where $S^{p-1} := \{f \in [0,1]^p : \sum_{i=1}^p f_i =1\}$ is the standard simplex in $\mathbb{R}^p$. $f$ assigns to each location $i$ in the $k$-space a probability $f_i$ to be acquired. The mask is then constructed by drawing without replacement from $f$ until the cardinality constraint $|\Omega|=n$ is met. The problem of finding the optimal sampling distribution is subsequently formulated as \useshortskip
\begin{equation}
\max_{f\in S^{p-1}} \ef, 
\qquad \ef :=  \mathbb{E}_{\substack{\Omega(f,n)\\ \*x \sim \mathcal{P}_{\*x}}}\left[\eta\left(\*x, \*{\hat{x}}\left(\Omega,\*x\right)\right)\right],
\label{eq:main}
\end{equation}
where the index set $\Omega\subset [p] $ is generated from $f$ and $[p] := \{1,\ldots,p\}$. This problem corresponds to finding the probability distribution $f$ that maximizes the expected performance metric with respect to the data $\mathcal{P}_{\*x}$ and the masks drawn from this distribution. To ease the notation, we will use $\eta\left(\*x, \*{\hat{x}}\left(\Omega,\*x\right)\right) \equiv \eta\left(\*x; \Omega\right)$.


In practice, we do not have access to $\mathbb{E}_{\mathcal{P}_{\*x}} \left[\eta(\*x; \Omega)\right]$ and instead have at hand the training images $\{\*x_i\}_{i=1}^m$ drawn independently from $\mathcal{P}_{\*x}$. We therefore maximize the empirical perfromance by solving \useshortskip
\begin{equation}
\label{eq:emp}
\max_{f\in S^{p-1}}  \emf, \text{~~~~} \emf :=\frac{1}{m} \sum_{i=1}^m \mathbb{E}_{\Omega(f,n)}\left[\eta(\Omega,\*x_i)\right].
 \end{equation}\useshortskip 
Given that Problem \eqref{eq:emp} looks for  masks that are constructed by sampling $n$ times without replacement from $f$, the following holds.
\begin{proposition}
There exists a maximizer of Problem \eqref{eq:emp} that is supported on an index set of size at most $n$.\label{prop:1}
\end{proposition}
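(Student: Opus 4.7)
The plan is to rewrite the empirical objective $\emf$ as a convex combination indexed by the finite collection of admissible $n$-subsets, after which the proposition reduces to the observation that maximizing a linear functional on a probability simplex can be done at a vertex. For each $\Omega\subseteq[p]$ with $|\Omega|=n$, let $\bar\eta(\Omega) := \frac{1}{m}\sum_{i=1}^m \eta(\*x_i;\Omega)$ denote the empirical performance of the mask $\Omega$, and let $q_f(\Omega)$ be the probability that the size-$n$ index set produced by drawing without replacement from $f$ equals $\Omega$. Swapping the two finite sums in the definition of $\emf$ yields
$$\emf \;=\; \sum_{\Omega\subseteq[p],\,|\Omega|=n} q_f(\Omega)\,\bar\eta(\Omega),$$
and since $\{q_f(\Omega)\}_\Omega$ is a probability distribution on the $n$-subsets, this gives the upper bound $\emf \le \max_{|\Omega|=n}\bar\eta(\Omega)$.

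The next step is to exhibit an explicit maximizer that saturates this bound. Let $\Omega^\star$ be any $n$-subset attaining $\max_{|\Omega|=n}\bar\eta(\Omega)$ (existence is automatic since there are only $\binom{p}{n}$ candidates), and define $f^\star \in S^{p-1}$ by $f^\star_j = 1/n$ for $j\in\Omega^\star$ and $f^\star_j = 0$ otherwise. When one samples $n$ times without replacement from $f^\star$, the conditional distribution at each step, renormalized on the not-yet-drawn indices, remains uniform on the remaining elements of $\Omega^\star$; hence the drawn set equals $\Omega^\star$ almost surely. Therefore $q_{f^\star}(\Omega^\star)=1$ and $\emf|_{f=f^\star} = \bar\eta(\Omega^\star)$, matching the upper bound. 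Since $f^\star$ is supported on $n$ indices, the proposition follows.

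No genuinely hard step arises: the substance of the claim is the linearity of $\emf$ in the induced subset-probability vector $\{q_f(\Omega)\}_\Omega$, which is immediate once the sampling mechanism is written out. The only bookkeeping point is that $f$ with $|\mathrm{supp}(f)|<n$ cannot produce a size-$n$ mask without replacement; this is harmless because we only need the supremum to be attained on the admissible region $\{f : |\mathrm{supp}(f)|\ge n\}$, which contains $f^\star$ and on which the identity for $\emf$ above is valid.
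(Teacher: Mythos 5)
Your proof is correct and follows essentially the same route as the paper's: decompose $\emf$ as a convex combination $\sum_{|\Omega|=n}\Pr[\Omega|f]\,\bar\eta(\Omega)$, bound it by the best subset value, and observe that a distribution supported exactly on a maximizing $n$-subset attains the bound. Your version is slightly more explicit in constructing $f^\star$ and verifying that sampling without replacement from it yields $\Omega^\star$ with probability one, but the substance is identical.
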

\begin{proof}
Let the distribution $\widehat{f}_n$ be a maximizer of Problem~\eqref{eq:emp}. We are interested in finding the support of $\widehat{f}_n$. Because $\sum_{|\Omega|=n}\Pr[\Omega]=1$, note that 

\begin{align}
\max_{f\in S^{p-1}}  \emf & := \max_{f\in S^{p-1}}  \sum_{|\Omega|=n}  \frac{1}{m}\sum\nolimits_{i=1}^m\eta(\*x_{i};\Omega) \cdot \Pr[\Omega|f]\nonumber\\ 
& \le \max_{f\in S^{p-1}} \max_{|\Omega|=n} \frac{1}{m}\sum\nolimits_{i=1}^m\eta(\*x_{i}; \Omega) \nonumber\\
& = \max_{|\Omega|=n}   \frac{1}{m}\sum\nolimits_{i=1}^m\eta(\*x_{i};\Omega). 
\end{align}
Let $\widehat{\Omega}_n$ be an index set of size $n$ that maximizes the last line above. 
The above holds with equality when $\Pr[\widehat{\Omega}_n]=1$ and $\Pr[\Omega]=0$ for $\Omega\ne \widehat{\Omega}_n$ and $f=\widehat{f}_n$. This in turn happens when $\widehat{f}_n$ is  supported on $\widehat{\Omega}$. That is, there exists a maximizer of Problem \eqref{eq:emp} that is supported on an index set of size $n$. 
\end{proof}
While this observation does not indicate how to find this maximizer, it nonetheless allows us to further simplify Problem \eqref{eq:emp}. More specifically, the observation that a distribution $\widehat{f}_n$ has a compact support of size $n$ implies the following:
\begin{proposition}\useshortskip
\begin{equation}
\text{Problem \eqref{eq:emp}} \equiv  \max_{|\Omega|=n}  \frac{1}{m}\sum_{i=1}^m \eta(\*x_{i}; \Omega) \label{eq:greedy}
\end{equation} \label{prop:2}
\vspace{-.5cm}
\end{proposition}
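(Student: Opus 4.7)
The plan is to establish both directions of the claimed equality of optimal values. The $\le$ direction is already contained in the proof of Proposition~\ref{prop:1}: the inequality chain there shows that for every $f \in S^{p-1}$,
\[
\emf \le \max_{|\Omega|=n}\frac{1}{m}\sum_{i=1}^m \eta(\*x_i;\Omega),
\]
and hence the left-hand problem cannot exceed the combinatorial maximum on the right. So the non-trivial half of the work is to exhibit a feasible $f$ that attains this upper bound.

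For that matching $\ge$ direction, I would construct such an $f$ explicitly from the combinatorial optimizer. Let $\widehat{\Omega}_n$ denote any index set of size $n$ achieving the right-hand maximum, and let $\widehat{f}_n \in S^{p-1}$ be any probability vector supported exactly on $\widehat{\Omega}_n$ (for concreteness, the uniform distribution on $\widehat{\Omega}_n$). The key observation is that drawing $n$ elements without replacement from a distribution whose support has cardinality exactly $n$ returns the whole support with probability one, so $\Pr[\Omega=\widehat{\Omega}_n\mid \widehat{f}_n]=1$. Plugging this into the definition of $\eta_m$ yields $\eta_m(\widehat{f}_n)=\tfrac{1}{m}\sum_{i=1}^m \eta(\*x_i;\widehat{\Omega}_n)$, which matches the combinatorial maximum and completes the equivalence.

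The whole argument is essentially a direct corollary of Proposition~\ref{prop:1}: once the optimization over $f$ is known to admit a maximizer of compact support, the without-replacement sampling on that support becomes deterministic and the outer search over distributions collapses to a search over $n$-subsets. The only mild subtlety is guaranteeing support of size \emph{exactly} $n$ rather than strictly less (so that sampling $n$ without replacement is well-defined); this is automatic in the construction above, since $\widehat{f}_n$ is built directly from an index set of size $n$. I do not foresee a genuine obstacle beyond making this observation explicit.
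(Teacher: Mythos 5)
Your proof is correct and follows essentially the same route as the paper: both arguments hinge on the observation that drawing $n$ samples without replacement from a distribution supported on exactly $n$ indices deterministically returns that support, so the expectation collapses to the combinatorial objective $\frac{1}{m}\sum_{i}\eta(\*x_i;\Omega)$. The only difference is presentational---you organize the argument as a two-sided inequality (the upper bound recycled from Proposition~\ref{prop:1}'s averaging step, plus achievability via an explicit distribution on $\widehat{\Omega}_n$), whereas the paper writes a chain of equalities after restricting to $n$-supported distributions.
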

\begin{proof}
Proposition~\ref{prop:1} tells us that a solution of Problem~\eqref{eq:emp} is supported on a set of size at most $n$, which implies
\begin{equation}
\text{Problem \eqref{eq:emp}} \equiv 
\max_{f\in S^{p-1}, |\text{supp}(f)| = n }\emf\label{eq:main emp equiv 1}.
\end{equation} 
That is, we only need to search over compactly supported distributions $f$. Let $S_\Gamma$ denote the standard simplex on a support $\Gamma\subset [p]$. It holds that
\begin{align}
\text{Problem \eqref{eq:main emp equiv 1}} &  \equiv
\max_{|\Gamma|=n} \max_{f\in S_\Gamma}\emf \nonumber\\
 = \max_{|\Gamma|=n} &\max_{f \in S_\Gamma} \frac{1}{m}\sum\nolimits_{i=1}^m\eta(\*x_{i};\Gamma) \cdot \Pr[\Gamma|f] \nonumber\\
 = \max_{|\Gamma|=n}& \max_{f \in S_\Gamma} \frac{1}{m}\sum\nolimits_{i=1}^m\eta(\*x_{i};\Gamma) \nonumber\\
 =  \max_{|\Gamma|=n} & \frac{1}{m}\sum\nolimits_{i=1}^m\eta(\*x_{i};\Gamma).
\label{eq:main emp equiv 2}
\end{align}
To obtain the second and third equalities, one observes that all masks have a common support $\Gamma$ with $n$ elements, i.e. $f\in S_\Gamma$ allows only for a single mask $\Omega$ with  $n$ elements, namely $\Omega=\Gamma$.~\end{proof}
The framework of Problem~\eqref{eq:emp} captures most variable-density based approaches of the literature that are defined in a data-driven fashion \cite{seeger2010optimization,ravishankar2011adaptive,vellagoundar2015robust,weizman2015compressed,haldar2019oedipus,bahadir2019learning,sherry2019learning}, and Proposition \ref{prop:2} shows that Problem~\eqref{eq:main emp equiv 2}, that we tackled in \cite{gozcu2017,gozcu2019rethinking} and develop here, also aims at solving the \textit{same} problem as these probabilistic approaches. 
Note that while the present theory considered sampling \textit{points} in the Fourier space, it is readily applicable to the Cartesian case, where full lines are added to the mask at once. 

\vspace{-.3cm}
\section{Stochastic greedy mask design}\label{sec:learning}
\vspace{-.3cm}
Aligned with the approach that we previously proposed in \cite{gozcu2017}, we want to find an approximate solution to Problem~\eqref{eq:greedy} by leveraging a greedy algorithm. This is required by Problem~\eqref{eq:greedy} being inherently combinatorial. The previous greedy method of \cite{gozcu2017,gozcu2019rethinking} suffers from three main drawbacks: \di it scales quadratically with the total number of lines, \ii it scales linearly with the size of the dataset, and \iii it does not construct mask with a fixed number of readouts by frame. While \cite{gozcu2019rethinking} partially deals with \textbf{(i)}, our proposed stochastic greedy approach addresses all three issues, while preserving the benefits of \cite{gozcu2017}. It notably still preserves the nestedness and ordering of the acquisition, where critical locations are acquired initially, and the mask built outputs a nested structure (i.e. the mask at $30\%$ sampling rate includes all sampling locations of the mask at $20\%$).

Let us introduce the set $\mathcal S$ of all lines that can be acquired, which is a set of subsets of $\{1,\dotsc,p\}$. A feasible Cartesian mask takes the form $ \Omega = \bigcup_{j=1}^\ell S_j, \quad S_j \in \mathcal{S}$, i.e. it consists of a union of lines.  Both the greedy method of \cite{gozcu2017} and our stochastic method are detailed in Algorithm \ref{alg:1} below. Our stochastic greedy method (\textbf{SG-v2}) addresses the three main limitations of the greedy method of \cite{gozcu2017} (\textbf{G-v1}).  The issue \di is solved by picking uniformly at random  at each iteration a batch possible lines $\mathcal{S}_{iter}$ of size $k$  from a given frame $\mathcal{S}_t$, instead of considering the full set of possible lines $\mathcal{S}$ (line 3 in Alg. \ref{alg:1}); \ii is addressed by considering  a fixed batch of training data $\mathcal{L}$ of size $l$ instead of the whole training set of size $m$ at each iteration (line 4 in Alg. \ref{alg:1}); \iii is solved by iterating through the lines to be added from each frame $\mathcal{S}_t$ sequentially (lines 1, 3 and 10 in Alg. \ref{alg:1}).  These improvements are inspired by the refinements done to the standard greedy algorithm in the field of submodular optimization \cite{mirzasoleiman2015lazier}, and allow to move the computational complexity from $\Theta\left(mr(NT)^2\right)$ to $\Theta\left(lrkNT\right)$, effectively speeding up the computation by a factor $\mathbf{\Theta(\frac{m}{l}\frac{NT}{k})}$. Our results show that this is achieved without sacrificing any reconstruction quality.\vspace{-.2cm}

\begin{algorithm}[ht]
\caption{Greedy mask optimization algorithms for dMRI \\ \textbf{(G)} refers to the greedy algorithm \cite{gozcu2017} \\   \textbf{(SG)} refers to the \textbf{stochastic greedy} algorithm  \\ \textbf{(v1)}  algorithm iterated throughout the whole training set \\ \textbf{(v2)} algorithm iterated through batches of training examples}\label{alg:1}
\textbf{Input}: Training data $\{\*x\}_{i=1}^m$, recon. rule $g$, sampling set $\mathcal{S}$, max. cardinality $n$, samp. batch size $k$, train. batch size $l$ \\
\textbf{Output}: Sampling pattern $\Omega$
\begin{algorithmic}[1]
\State 
 \begin{tabular}{ccc} 
\textbf{(SG) }  Initialize $t=1$  
    \end{tabular}
 \While{$|\Omega| \leq n$}
        
        \State   $ \left \{\begin{tabular}{l}
 \textbf{(G)~~~} Pick $\mathcal{S}_{iter} = \mathcal{S}$   \\
 \textbf{(SG)~} Pick $\mathcal{S}_{iter}\subseteq \mathcal{S}_t$ at random, with $|\mathcal{S}_{iter}| = k$  
  \end{tabular} \right.$
        \State   $ \left \{\begin{tabular}{l}
 \textbf{(v1) } Pick $\mathcal{L} = \{1,\ldots,m\}  $ \\
 \textbf{(v2) } Pick $\mathcal{L} \subseteq \{1,\ldots,m\} $, with $|\mathcal{L}| = l $
  \end{tabular} \right.$
         \For{$S \in \mathcal{S}_{iter} \text{ such that } |\Omega \cup S| \leq \Gamma$} 
     
        \State  $\Omega' = \Omega \cup S$ 
        \State For each $\ell \in \mathcal{L}$ set ${\hat{\*x}}_\ell\leftarrow g(\Omega', \*P_{\Omega'}\bm\Psi\*x_\ell)$
        \State   $\eta(\Omega') \leftarrow \frac{1}{|\mathcal{L}|} \sum_{\ell\in\mathcal{L}} \eta(\*x_\ell, {\hat{\*x}}_\ell)$
 \EndFor
 \State $\displaystyle\Omega \leftarrow \Omega \cup S^*,  \text{ where }$   $\displaystyle S^* = \argmax_{\substack{S:|\Omega \cup S| \leq n}} \eta(\Omega\cup S)$
\State   \textbf{ (SG) }   $t= (t \bmod T)+1$  
    \EndWhile
\State {\bf return} $\Omega$ 
\end{algorithmic}

\end{algorithm} 
\vspace{-.5cm}
\section{Numerical Experiments}\label{sec:experiments}
\vspace{-.3cm}
\subsection{Implementation details}
\vspace{-.2cm}
\textbf{Reconstruction algorithms:}  We consider three reconstruction algorithms, namely \textit{$k$-$t$ FOCUSS} (KTF) \cite{jung2009k},  and \textit{ALOHA} \cite{jin2016general}.  Their parameters were selected to maintain a good empirical performance across all sampling rates considered.

\textbf{Mask selection baselines:}\vspace{-.3cm}

\begin{itemize}[leftmargin=*]
\item\textit{Coherence-VD} \cite{lustig2007sparse}: We consider a random \textit{variable-density} sampling mask with Gaussian density and optimize its parameters to minimize coherence.
\item\textit{LB-VD} \cite{gozcu2017,gozcu2019rethinking}: Instead of minimizing the coherence as in \textit{Coherence-VD}, we perform a grid search on the parameters using the training set to optimize reconstruction according to the same performance metric as our method.
\end{itemize}
\textbf{Data sets:}  Our dynamic data were acquired in seven adult volunteers with a balanced steady-state free precession (bSSFP) pulse sequence on a whole-body Siemens 3T scanner using a 34-element matrix coil array. Several short-axis cine images were obtained during a breath-hold scan. Fully sampled Cartesian data were acquired using a $256\times 256$ grid with $25$ frames, then combined and cropped to  a $152 \times 152 \times 17$ single coil image. The details of the parameters used are provided in the supplementary material \cite{sanchez2018}. In the experiments, we used three volumes for training and four for testing. 

\vspace{-.4cm}
\subsection{Comparison of greedy algorithms}
\vspace{-.2cm}
We first compare the performance of \ga with SG-v1 and SG-v2, and show the results on Figure \ref{fig:psnr_batch}. We are specifically interested in determining the sensitivity of our algorithm to the sampling batch size $k$ and training batch size $l$ (for SG-v2, we use $l=1$ unless stated differently). We see that using a small batch size $k$ (e.g. $10$) yields a drop in performance, while $k=38$ even improves performance compared to G-v1, with respectively $60$ times less computation for \sga and $180$ less computations for SG-v2. One should also note that using a batch of training images (SG-v2) does not reduce the performance compared to SG-v1, while largely reducing computations. Also, additional results (in the supplementary material \cite{sanchez2018}) show that using larger batches yields similar results as for $k=38$. The fact that the performance of SG-v2 with $k=38$ outperforms G-v1 could be surprising, but originates in the lack of structure of the problem, where introducing noise in the computations  through random batches of samples improves the overall performance of the method. In the sequel, we use $k=38$ and $l=1$ for SCG-v2.

\begin{figure}[!tbp]

  \centering
  \begin{minipage}[b]{.58\linewidth}
 \hspace{-.3cm}\includegraphics[width=\linewidth]{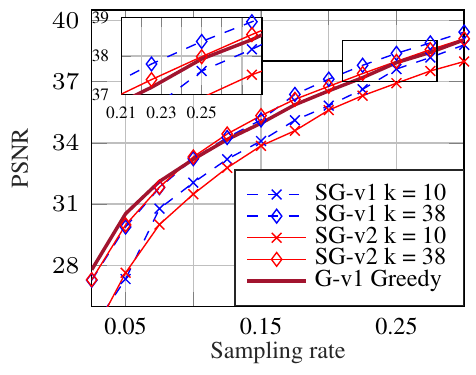}
\end{minipage}\hfill
 \begin{minipage}[b]{.4\linewidth}
 \caption{PSNR as a function of the sampling rate for KTF, comparing the different reconstruction methods as well as the effect of the batch size on the quality of the reconstruction for SG. }\label{fig:psnr_batch}
 \end{minipage} 
  \vspace{-.8cm}
\end{figure}
\vspace{-.4cm}
\subsection{Single coil results}\label{sec:baseline_main} 
\vspace{-.2cm}
The comparison to baselines is shown on Figures \ref{fig:psnr_cross} and \ref{fig:plot_cross}, where we see that the SG-v2 method yields masks that consistently improve the results compared to all variable-density methods used.

We notice in Figure \ref{fig:plot_cross} that comparing the reconstruction algorithms with VD methods do not allow for a faithful performance comparison of the reconstruction algorithms: the performance difference is very small between the reconstruction methods. In contrast, considering the reconstruction algorithm jointly with a sampling pattern optimized with our model-free approach makes the performance difference much more noticeable: ALOHA with its corresponding mask clearly outperforms KTF, and this conclusion could not be made by looking solely at reconstructions with VD-based masks. Note that extended results, along with multi-coil experiments, are available in our supplementary material \cite{sanchez2018}.

\begin{figure}[!t]
\centering
\vspace{-.7cm}
\hspace{-.3cm}\includegraphics[width=.95\linewidth]{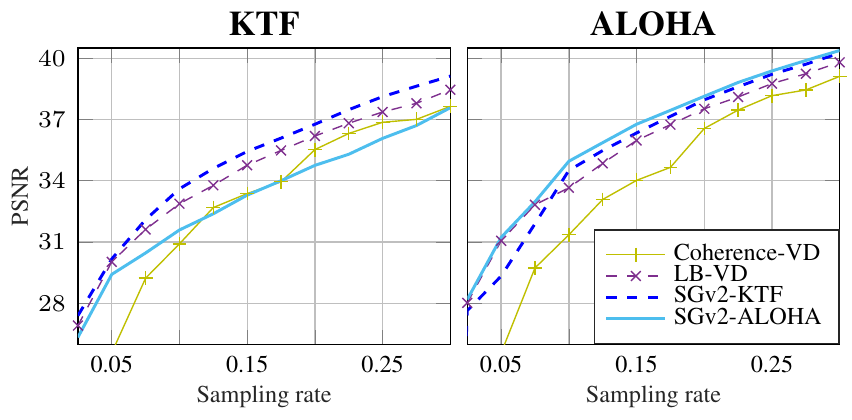}
\vspace{-.5cm}
\caption{PSNR as a function of sampling rate for both reconstruction algorithms considered, comparing the mask design methods considered, averaged on $4$ images.}\label{fig:psnr_cross}
\vspace{-.5cm}
\end{figure}

\begin{figure}[!t]
\centering
\vspace{-.7cm}
\includegraphics[width=.95\linewidth]{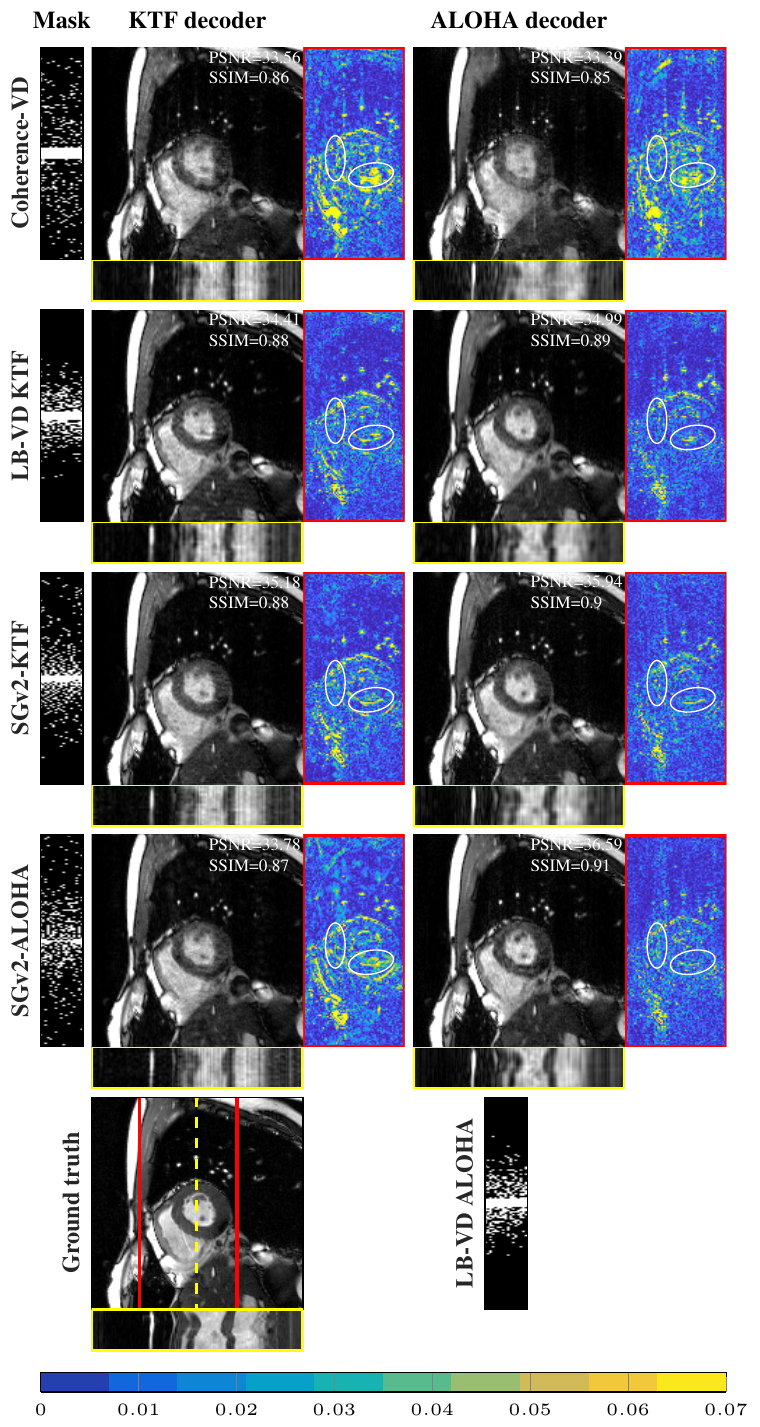}
\vspace{-.4cm}
\caption{Comparison of the different reconstruction masks and decoders, for a sampling rate of $15\%$ on a single sample with its PSNR/SSIM performances.}\label{fig:plot_cross}
\vspace{-.5cm}
\end{figure}
\vspace{-.5cm}
\subsection{Large scale static results}
\vspace{-.3cm}

This last experiment shows the scalability of our method to very large datasets. We used the fastMRI dataset \cite{zbontar2018fastmri} consisting of knee volumes  and trained the mask for reconstructing the $13$ most central slices of size $320\times 320$, which yielded a training set containing $12649$ slices. For the sake of brevity, we only report computations performed using total variation (TV) minimization with NESTA \cite{becker2011nesta}. For mask design, we used the SG-v2 method with $k=80$ and $l=20$ (2500 fewer computations compared to G-v1). The LB-VD method was trained using $80$ representative slices and optimizing the parameters with a similar computational budget as SG-v2. The result on Figure \ref{fig:psnr_fastmri} shows a uniform improvement of our method over the LB-VD approach.

\begin{figure}[!t]
  \centering
  \begin{minipage}[t]{.58\linewidth}
  \vspace{0pt}
 \hspace{-.3cm}\includegraphics[width=\linewidth]{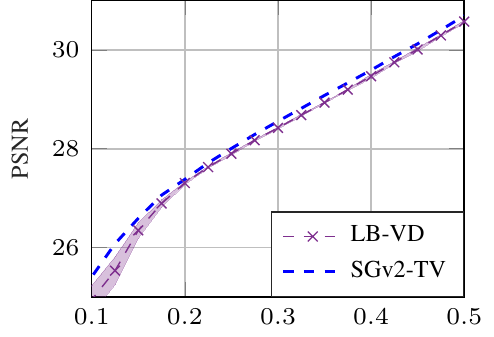}
\end{minipage}\hfill
 \begin{minipage}[t]{.42\linewidth}
 \vspace{0pt}
 \caption{PSNR as a function of the sampling rate for TV, averaged on the 13 most central slices of the fastMRI validation set \cite{zbontar2018fastmri} (2587 slices). SGv2 outperforms LB-VD over all sampling rates.}\label{fig:psnr_fastmri}
  \vspace{-1cm}
 \end{minipage} 
   \vspace{-.8cm}
\end{figure}

\vspace{-.3cm}
\section{Discussion and Conclusion}\label{sec:conclusion}
\vspace{-.3cm}
We presented a scalable sampling optimization method for dMRI, which largely addresses the scalability issues of  \cite{gozcu2017,gozcu2019rethinking}.  Reducing the resources used by \ga by as much as a $200$ times was shown to have no negative impact on the quality of reconstruction achieved within our framework. Our method was demonstrated to successfully scale to very large datasets such as fastMRI \cite{zbontar2018fastmri}, which the previous greedy method \cite{gozcu2017} could not achieve.

The masks obtained bring significant image quality improvements over the baselines. The results suggest that VD-based methods limit the performance of CS applied to MRI through their underlying model. They are consistently outperformed by our model-free and data-adaptive method on different \textit{in vivo} datasets, across several decoders, field of views and resolutions. Our findings highlight that sampling design should not be considered in isolation from data and reconstruction algorithm, as using a mask that is not specifically optimized can considerably hinder the performance of the algorithm.

More importantly, our theoretical results show that the generic non-convex Problem~\eqref{eq:emp} aiming at finding a probability mass  function under a cardinality constraint from which a mask is subsequently sampled, is equivalent to the discrete Problem~\eqref{eq:main emp equiv 2} of looking for the support of this PMF. This connection opens the door to rigorously leveraging techniques from combinatorial optimization for the problem of designing optimal, data-driven sampling masks for MRI.

\vspace{-1.2cm}
\bibliographystyle{IEEEtran}
\vspace{-1.2cm}
\bibliography{biblio}
\appendix
\newpage
\section{Detailed description of the datasets}
\textbf{Cardiac dataset.} The  data set was acquired in seven healthy adult volunteers with a balanced steady-state free precession (bSSFP) pulse sequence on a whole-body Siemens 3T scanner using a 34-element matrix coil array. Several short-axis cine images were acquired during a breath-hold scan. Fully sampled Cartesian data were acquired using a $256\times 256$ grid, with relevant imaging parameters including \SI{320 x 320}{\milli\meter} field of view (FoV), \SI{6}{\milli\meter} slice thickness, \SI{1.37 x 1.37}{\milli\meter} spatial resolution, \SI{42.38}{\milli\second} temporal resolution, \SI{1.63/3.26}{\milli\second} TE/TR, \ang{36} flip angle, \SI{1395}{\hertz}\mbox{/px} readout bandwidth. There were $13$ phase encodes acquired for a frame during one heartbeat, for a total of $25$ frames after the scan. 

The Cartesian cardiac scans were then combined to single coil data from the initial $256\times 256 \times 25 \times 34$ size, using adaptive coil combination \citeapndx{walsh2000adaptive, griswold2002use}, which keeps the image complex. This single coil image was then cropped to a $152 \times 152 \times 17$ image. This is done because a large portion of the periphery of the images are static or void, and also to enable a greater computational efficiency. 

\noindent
\textbf{Vocal dataset.} The vocal dataset that we used in the experiments \ref{sec:single} comprised $4$ vocal tract scans with a 2D HASTE sequence (T2 weighted single-shot turbo spin-echo) on a 3T Siemens Tim Trio using a 4-channel body matrix coil array. The study  was  approved  by  the  local  institutional  review  board,  and  informed  consent  was  obtained  from  all  subjects  prior  to  imaging.  Fully sampled Cartesian data were acquired using a $256 \times 256$ grid, with \SI{256 x 256}{\milli\meter} field of view (FoV), \SI{5}{\milli\meter} slice thickness, \SI{1 x 1}{\milli\meter} spatial resolution, \SI{98/1000}{\milli\second} TE/TR, \ang{150} flip angle,  \SI{391}{\hertz}\mbox{/px} readout bandwidth, \SI{5.44}{\milli\second} echo spacing ($256$ turbo factor). There was a total of $10$ frames acquired, which were recombined to single coil data using adaptive coil combination as well \citeapndx{walsh2000adaptive, griswold2002use}.  

\noindent
\textbf{fastMRI.} The fastMRI dataset was obtained from the NYU fastMRI initiative \cite{zbontar2018fastmri}. The anonymized dataset comprises raw k-space data from more than 1,500 fully sampled knee MRIs obtained on 3 and 1.5 Tesla magnets. The dataset includes coronal proton density-weighted images with and without fat suppression.

\vspace{-.4cm}
\section{Extended literature review}

The most widely used approach for the design of the sampling pattern $\*\Omega$ is \textit{random variable-density sampling}, which was originally proposed by Lustig et al. \cite{lustig2007sparse} for static MRI and adapted to dynamic MRI by Jung et al. \cite{jung2007improved}. It offers a compromise between incoherent measurements, required by the theory of CS, and the structure that can be found in the \mbox{k-space}, where most of the energy is concentrated in the low frequency end of the spectrum. This classical approach draws random samples according to a parametric distribution mimicking the energy distribution of the \mbox{k-space}, favoring low-frequency samples. The distribution considered is typically either polynomial \cite{lustig2007sparse, weizman2015compressed} \citeapndx{kim2012accelerated,tremoulheac2014dynamic}, or Gaussian \cite{jung2009k, otazo2010combination, caballero2014dictionary,otazo2015low,jin2016general,schlemper2018deep}. In these setups, a slight offset is often added in order to prevent the distribution from having extremely small probabilities at high-frequencies, and a few low-frequency \mbox{k-space} samples are acquired at the Nyquist rate.

The variable-density based methods commonly used in dMRI perform well, but have several weaknesses, already highlighted in \cite{gozcu2017} for static MRI. They require parameters to be tuned, such as decay rate of the polynomial, the standard deviation of the Gaussian distribution or the number of central phase encodes and arbitrarily constrain the sampling patterns to a model without any theoretical justification. Moreover, it is unclear which sampling density will be most effective for a given anatomy and reconstruction rule. Also, the idea of randomizing the acquisition is in itself questionable, as in practice, one would desire to design a fixed sampling pattern that we will know to perform well for a specific anatomy across many subjects. Finally, some variable-density methods, such as Poisson Disc Sampling \citeapndx{vasanawala2011practical}, do not use a fixed number of readouts per frame, which complicates their hardware implementation for dynamic MRI \citeapndx{ahmad2015variable}. Indeed, undersampling some frames more heavily than others might result in missing critical temporal information.

Recently, several articles have focused on improved design of spatiotemporal sampling patterns for dMRI, and we hereafter detail two particularly relevant methods. A recent method devised for this purpose is the variable density incoherent spatiotemporal acquisition (VISTA) \citeapndx{ahmad2015variable} that maximizes Riesz energy on a spatiotemporal grid, and has the notable advantage of generating patterns with high levels of incoherence, and maintaining uniform sampling density across frames. Another important technique proposed by Li et al. \citeapndx{li2018dynamic} develops a method for Cartesian sampling exploiting the golden-ratio, with the aim to generate incoherent measurements and maintain uniform sampling density across frames\footnote{This approach is different from the commonly used golden-\textit{angle} sampling used in radial sampling.}. 

Other relevant undersampling works include, in the non-Cartesian  setting, fully random radial sampling \citeapndx{jung2010radial, tremoulheac2014dynamic}, as well as golden-angle radial sampling,  where spokes separated by the golden-angle are continuously acquired \citeapndx{winkelmann2007optimal, feng2014golden,feng2016xd}. These results exploit the inherent advantage of radial over Cartesian sampling that each spoke goes through the sample of the k-space and can thus contain low-frequency as well as high-frequency information. More recent work also leverage variable-density approaches in the non-Cartesian setting \citeapndx{boyer2016generation,lazarus2018variable} Also, in static MRI, several methods exploiting training signals have been proposed: in \citeapndx{knoll2011adapted, zhang2014energy,vellagoundar2015robust}, a distribution from which random samples are drawn is constructed, and in \citeapndx{seeger2010optimization,ravishankar2011adaptive,liu2012under, haldar2019oedipus}, a single image is used at a time to determine the sampling mask. Very recently, deep-learning based methods have enabled active mask design paired with online reconstruction and shown very promising results \citeapndx{jin2019self,zhang2019reducing,weiss2019learning}. However, to the best of our knowledge, none of these methods have been extended to dynamic MRI. 

 \vspace{-.4cm}
\section{Influence of the batch size $k$ on the mask design}\label{app:batch}

In this appendix, we discuss the tuning of the batch size used in SG-v1, to specifically study the effect of different batch sizes. We ran SG-v1 with different batch sizes in the same settings are in the numerical experiment of section \ref{sec:baseline_main} and report on Figure \ref{fig:psnr_batch2} the PSNR of the reconstructions for SG-v1. We only considered KTF for brevity. We see that very small batch sizes yield poor results, and the PSNR reaches the result from G-v1 with as few as 38 samples (out of $152 \times 17 = 2584$ samples overall). Unless then the batch size is extremely small (less than $1$ to $2\%$ of all phase encoding lines at each greedy iteration), the results suggest that the masks obtained with SG-v1 or SG-v2 yield satisfactory reconstruction quality, i.e. the same quality as G-v1 or even an increase. 

The Figure \ref{fig:mask_sto_cycling} shows the different masks obtained for the batch sizes considered, several observations can be made. First of all, as expected, taking a batch size of $1$ yields a totally random mask, and taking a batch size of $5$ yields a mask that is more centered towards low frequency than the one with $k=1$ but it still has a large variance. Then, as the batch size increases, resulting masks seem to converge to very similar designs, but those are slightly different from the ones obtained with G-v1.  

\begin{figure}
  \centering
  \begin{minipage}[b]{.6\linewidth}
 \hspace{-.3cm}\includegraphics[width=\linewidth]{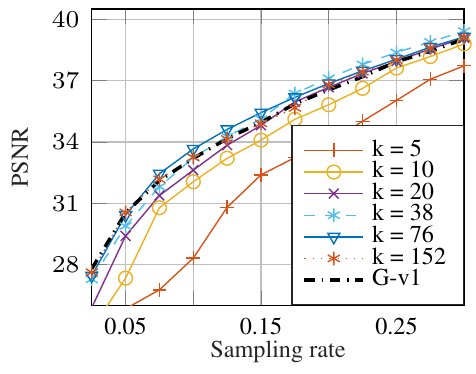}
\end{minipage}\hfill
 \begin{minipage}[b]{.4\linewidth}
\caption{PSNR as a function of the rate for KTF, comparing the effect of the batch size on the quality of the reconstruction for SG-v1. The result is averaged on $4$ testing images of size $152\times 152\times 17$.}\label{fig:psnr_batch2}
 \end{minipage} 
  \vspace{-.8cm}
\end{figure}

\begin{figure}[!ht]
 \vspace{-.4cm}
\centering
	\includegraphics[width=.95\linewidth]{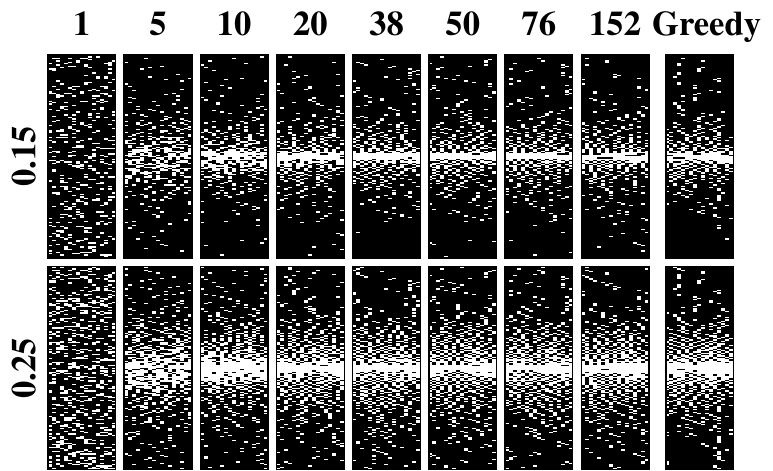}
	\caption{Learning-based masks obtained with SG-v1 for different batch sizes $k$ using KTF as a reconstruction algorithm, shown in the title of each column, for $15\%$ and $25\%$ sampling rate. The optimization used data of size $152\times 152 \times 17$, with a total of $2584$ possible phase encoding lines for the masks to pick from.}\label{fig:mask_sto_cycling}
	 \vspace{-.4cm}
\end{figure}

 \vspace{-.4cm}
\section{Computational costs}
\vspace{-.4cm}

We report here the computational costs for the different variations of the greedy methods used in the single coil experiment \ref{sec:baseline_main} as well as the computational costs for the Appendix \ref{sec:single}. Table \ref{tab:runtime}  provides the running times and empirically measured speedup for the greedy variation, and Table \ref{tab:LB-VD} provides the computational times required to obtain the learning-based variable density (LB-VD) parameters through an extensive grid-search. The empirical speedup is computed as 
\begin{equation}
\text{Speedup} = \frac{t_{\text{G-v1}}\cdot n_{\text{procs, G-v1}}}{t_{\text{SG-v2}}\cdot n_{\text{procs, SG-v2}}}
\end{equation}
The main point of these tables is to show that the computational improvement is very significant in terms of resources, and that our approach improves greatly the efficiency of the method of \cite{gozcu2017}. This ratio might differ from the predicted speedup factor of $\mathbf{\Theta(\frac{m}{l}\frac{NT}{k})}$ due to computational considerations. Table \ref{tab:runtime} shows that we have roughly a factor $1.2$ between the predicted and the measured speedup, mainly due to the communication between the multiple processes as well as I/O operations.

\begin{table*}[!t]
\centering
  \caption{Running time of the greedy algorithms for different decoders and training data sizes. The setting corresponds to  $n_{x}$, $n_{y}$, $n_{\text{frames}}$, $n_{\text{train}}$. $n_{\text{procs}}$ is the number of parallel processes used by each simulation. $^*$ means that the runtime was extrapolated from a few iterations. We used $k=n_{\text{procs}}$ for SG-v1 and SG-v2 and $l=3$ for SG-v2. The speedup column contains the measured speedup and the theoretical speedup in parentheses.}\label{tab:runtime}
  \begin{tabular}{>{\bfseries}c crccrcccc}
    \toprule
     \multirow{2}{*}{\textbf{Algorithm}} &\multirow{2}{*}{\textbf{Setting}} & \multicolumn{2}{c}{\textbf{G-v1}} & \multicolumn{3}{c}{\textbf{SG-v1}} &  \multicolumn{3}{c}{\textbf{SG-v2}} \\
     \cmidrule(r){3-4} \cmidrule(l){5-7} \cmidrule(l){8-10}
    &  &  \multicolumn{1}{c}{Time} &  $n_{\text{procs}}$  &  \multicolumn{1}{c}{Time} & $n_{\text{procs}}$ &  Speedup& \multicolumn{1}{c}{Time} & $n_{\text{procs}}$ &Speedup\\ 
    \midrule

     \multirow{2}{*}{KTF}  &152$\times$152$\times$17$\times$3 & $6\text{d }23\text{h~}$& $152$ & $11\text{h } 40$ & $38$ &$58$ $(68)$& $3\text{h } 25$&$38$& $\mathbf{170}$ $(204)$\\
     & 256$\times$256$\times$10$\times$2 &  $\sim 7\text{d~~}8\text{h}^*$ & $256$& $12\text{h } 20$ &$64$&$57$ $(68)$ & $5\text{h } 20$ &$64$& $\mathbf{173}$ $(204)$\\ \cmidrule(l){2-10}
     IST &152$\times$152$\times$17$\times$3 &  $3\text{d~}11\text{h~}$ &$152$& $5\text{h } 30$ &$38$&$60$ $(68)$& $1\text{h } 37$ &$38$& $\mathbf{184}$ $(204)$\\ \cmidrule(l){2-10}
     ALOHA  &152$\times$152$\times$17$\times$3 &   $\sim 25\text{d~~}1\text{h}^*$&$152$& $1\text{d }14\text{h } 25$ &$38$ &$62$ $(68)$& $18\text{h } 13$&$38$& $\mathbf{133}$ $(204)$\\ 
     \bottomrule
  \end{tabular}
  \vspace{-.4cm}
\end{table*}

\begin{table}[!t]
\centering
  \caption{Comparison of the learning-based random variable-density Gaussian sampling optimization for different settings. $n_{\text{pars}}$ denotes the size of the grid used to optimize the parameters. For each set of parameters, the results were averaged on $20$ masks drawn at random from the distribution considered. The $n_{\text{pars}}$ include a grid made of $12$ sampling rates (uniformly spread in $[0.025,0.3]$), $10$ different low frequency phase encodes (from $2$ to $18$ lines), and different widths of the Gaussian density (uniformly spread in $[0.05, 0.3]$) -- $10$ for the images of size $152\times 152$, 
  $20$ in the other case. }\label{tab:LB-VD}
  \begin{tabular}{>{\bfseries}c cccr}
  	\toprule
	Algo. & Setting  & $n_{\text{pars}}$  & $n_{\text{procs}}$ & Time \\
	\midrule
	  \multirow{2}{*}{KTF}  	& 152$\times$152$\times$17$\times$3~~ & 1200  &38 & $6$h~$30$ \\
	  					& 256$\times$256$\times$10$\times$2$^*$ & 2400  &64 & $6$h~$45$ \\ \cmidrule(l){2-5}
	IST  					& 152$\times$152$\times$17$\times$3~~ & 1200 &38 & $3$h~$20$\\ \cmidrule(l){2-5}
	ALOHA 				& 152$\times$152$\times$17$\times$3~~ & 1200  &38 & $1$d~$8$h\\
	\bottomrule
  \end{tabular}
  \vspace{0.2cm}
\end{table}

 \vspace{-.4cm}
\section{Multicoil experiments}
 \vspace{-.4cm}
For the multicoil experiment, we used the previously described cardiac dataset but we did not crop the images. We took the first $12$ frames for all subjects, and selected $4$ coils that cover the region of interest. Each image was then normalized in order for the resulting sum-of-squares image to have at most unit intensity.  When required, the coil sensitivities were self-calibrated according to the idea proposed in \citeapndx{feng2013highly}, which averages the signal acquired over time in the k-space and subsequently performs adaptive coil combination \citeapndx{walsh2000adaptive,griswold2002use}.

The advantage of using self-calibration is that the greedy optimization procedure can simultaneously take into account the need for accurate coil estimation as well as accurate reconstruction, thus potentially eliminating the need for a calibration scan prior to the acquisition. A more complete discussion of the accuracy of self-calibrated coil sensitivities is presented in \citeapndx{feng2013highly}.

We used $k$-$t$ SPARSE-SENSE  \citeapndx{otazo2010combination} and ALOHA \cite{jin2016general} for reconstruction. While the first requires coil sensitivities, the second reconstructs the images directly in k-space before combining the reconstructed data. We also introduce an additional mask designing baseline, namely golden ratio Cartesian sampling \citeapndx{li2018dynamic} that we will use in the sequel. We will refer to it as \textbf{golden}.

\begin{figure}[!ht]
  \centering
\hspace{-.5cm}\includegraphics[width=.95\linewidth]{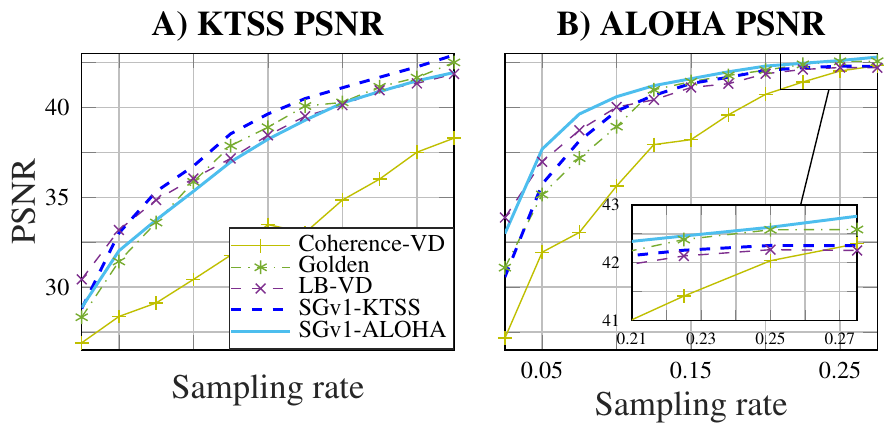}
 \vspace{-.4cm}
 \caption{\textbf{A-B)} PSNR as a function of sampling rate for KTSS \protect\citeapndx{otazo2010combination} and ALOHA \cite{jin2016general} in the multicoil setting, comparing SG-v1 with the coherence-VD \cite{lustig2007sparse}, LB-VD  and golden ratio Cartesian sampling \protect\citeapndx{li2018dynamic}, averaged on 4 testing images of size 256$\times$256$\times$12 with 4 coils. }\label{fig:psnr_multi}
  \vspace{-.4cm}
\end{figure}

\begin{figure}[!ht]
 \vspace{-.4cm}
\centering
\includegraphics[width=.85\linewidth]{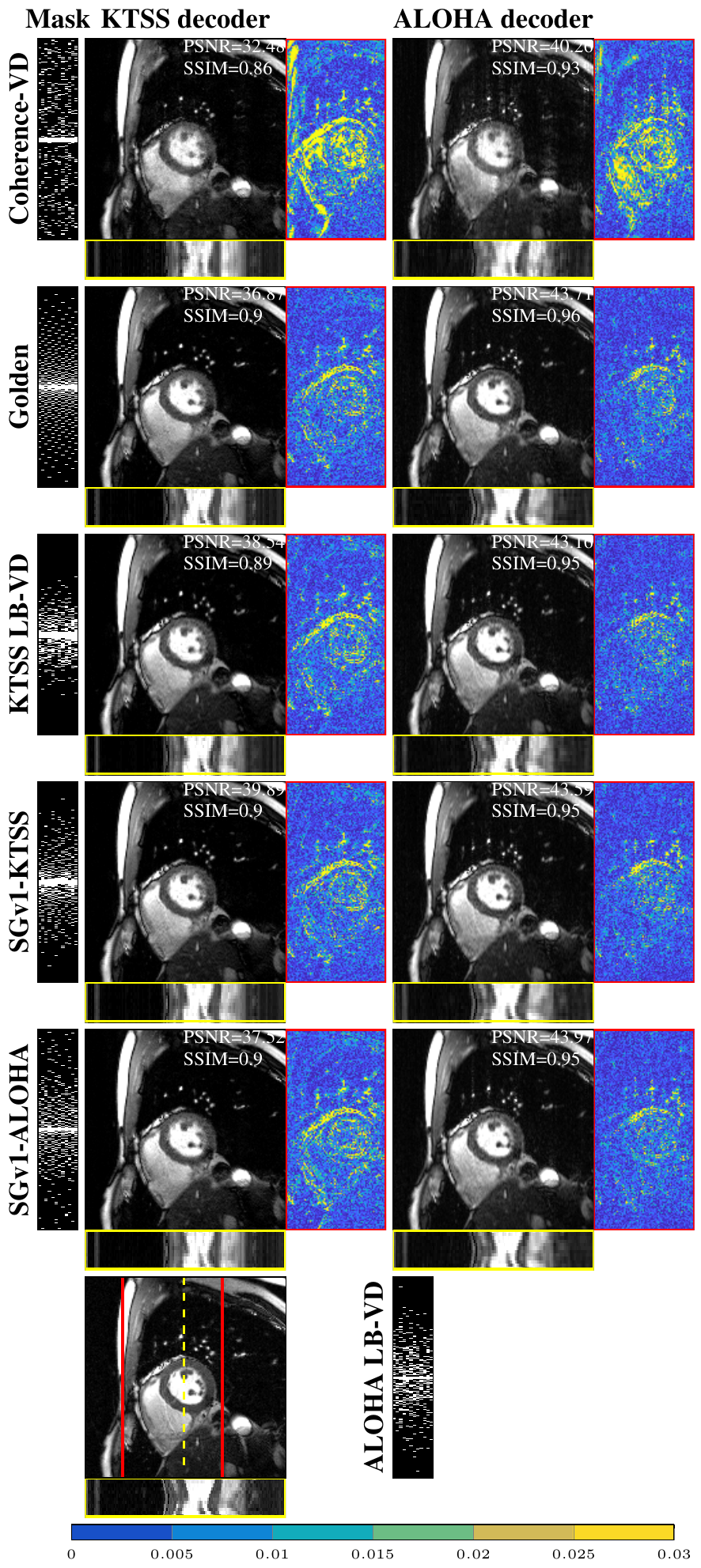}

 \caption{Reconstruction with KTSS \protect\citeapndx{otazo2010combination} and ALOHA \cite{jin2016general} at $15\%$  sampling rate for a 4 coil parallel acquisition of cardiac cine size 256$\times$256$\times$12. The setting is otherwise similar as the one presented in Figure 5 of \cite{sanchez2018}.}\label{fig:multi}
 \vspace{-.4cm}
\end{figure}

 \vspace{-.4cm}
\section{Additional single-coil results with SG-v1}\label{sec:single}
While the main paper focused on SG-v2, using a batch of training samples instead of the whole training set, we focus here on results with SG-v1. SG-v1 accelerated G-v1 by a factor $60$, and we contend that due to the small dataset used in our case, using a batch of training data instead of the whole set should not affect the performance. 
\vspace{-.4cm}
\subsection{Comparison to baselines}\label{sec:baseline}

The comparison to baselines is shown on Figures \ref{fig:psnr_cross} and \ref{fig:plot_cross}, where we see that the learning-based method yields masks which consistently improve the results compared to all variable-density methods used. Even though some variable-density techniques are able to provide good results for some sampling rates and algorithms, our learning-based technique is able to consistently provide improvement over this baseline. Compared to Coherence-VD, there is always at least $1$ dB improvement at any sampling rate, and it can be as much as $6.7$ dB at $5\%$ sampling rate for ALOHA. For \textit{golden}, there is an improvement larger than $1.5$ dB prior to $15\%$ rate, and around $0.5$dB after for all decoders. Figure \ref{fig:psnr_cross} also clearly indicates that the benefits of our learning-based framework become more apparent towards higher sampling rates, where the performance improvement over LB-VD reaches up to $1$ dB. Towards lower sampling rates, with much fewer degrees of freedom for mask design, the greedy method and LB-VD yield similar performance as expected. As shown in Figure \ref{fig:plot_cross}, the learning-based masks tend to conserve better the sharp contrast transition compared to the variable-density techniques. 

\vspace{-.4cm}
\subsection{Cross-performances of performance measures}
Up to here, we used PSNR as the performance measure, and we now compare it with the results of the greedy algorithm paired with SSIM, a metric that more closely reflect perceptual similarity. For brevity, we only consider ALOHA in this section. In the case where we optimized for SSIM, we noticed that unless a low-frequency initial mask is given, the reconstruction quality would mostly stagnate. This is why we chose to start the greedy algorithm with $4$ low-frequency phase encodes at each frame in the SSIM case.

The reconstructions for PSNR and SSIM are shown on Figure \ref{fig:plot_ssim}, where we see that the learning-based masks outperform the baselines across all sampling rates except at $2.5\%$ in the SSIM case. The quality of the results is very close for both masks, but each tends to perform slightly better with the performance metric for which it was trained. The fact that the ALOHA-SSIM result at $2.5\%$ has a very low SSIM is due to the fact that we impose $4$ phase encodes across all frames, and the resulting sampling mask at $2.5\%$ is a low pass mask in this case.

A visual reconstruction is provided in Figure \ref{fig:ssim}, we see that there is almost no difference in reconstruction quality, and that the masks remain very similar. Overall, we observe in this case that the performance metric selection does not have a dramatic effect on the quality of reconstruction, and our greedy framework is still able to produce masks that outperform the baselines when optimizing SSIM instead of PSNR.

\begin{figure}[t]
\centering
\vspace{-.2cm}
\includegraphics[width=.95\linewidth]{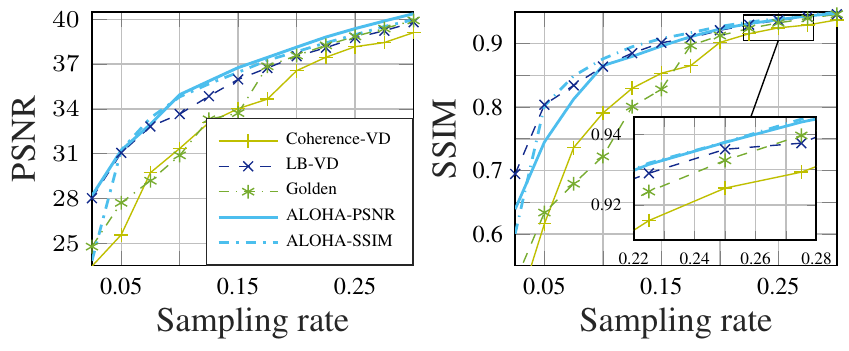}

\vspace{-.4cm}

\caption{PSNR and SSIM as a function of sampling rate for ALOHA, comparing the SG-v1 results optimized for PSNR and SSIM with the three baselines, averaged on $4$ testing images of size 152$\times$152$\times$17.}\label{fig:plot_ssim}

\includegraphics[width=.95\linewidth]{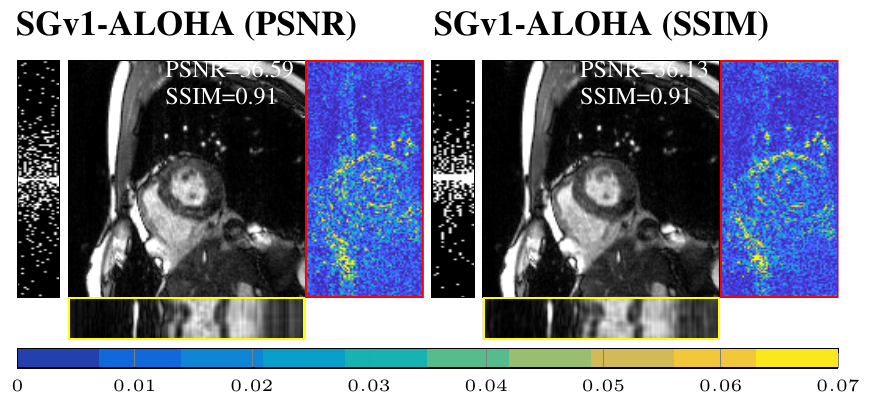}
\vspace{-.3cm}
\caption{Comparison of the sampling masks optimized for PSNR and SSIM with ALOHA, at $15\%$ sampling. The images and masks can be compared to those of Figure \ref{fig:plot_cross}, as the settings are the same.} \label{fig:ssim}
\end{figure}

\vspace{-.2cm}
\subsection{Experiments with different anatomies}\label{sec:large}

\begin{figure}[!t]
\centering
\includegraphics[width=.95\linewidth]{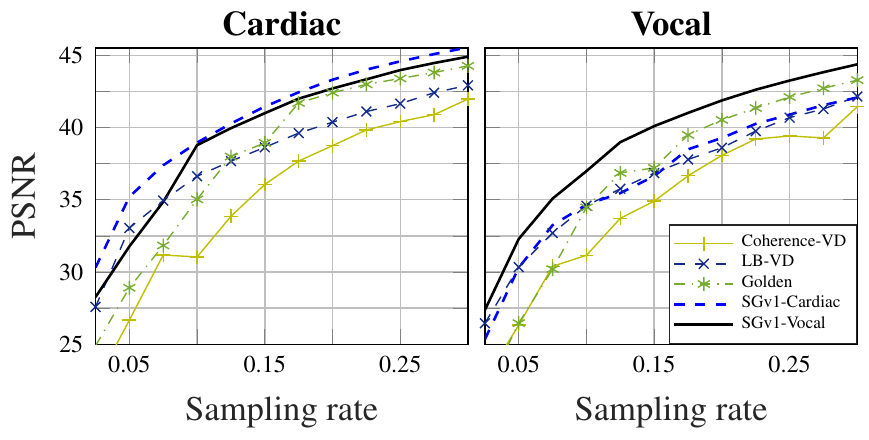}
\vspace{-.4cm}
\caption{PSNR as a function of sampling rate for KTF, comparing SG-v1 with both baselines, averaged on $2$ testing images for both cardiac and vocal data sets of size 256$\times$256$\times$10.}\label{fig:psnr_anat}
 \vspace{-.4cm}
\end{figure}

\begin{figure}[!t]
\centering
\vspace{.2cm}
\includegraphics[width=.95\linewidth]{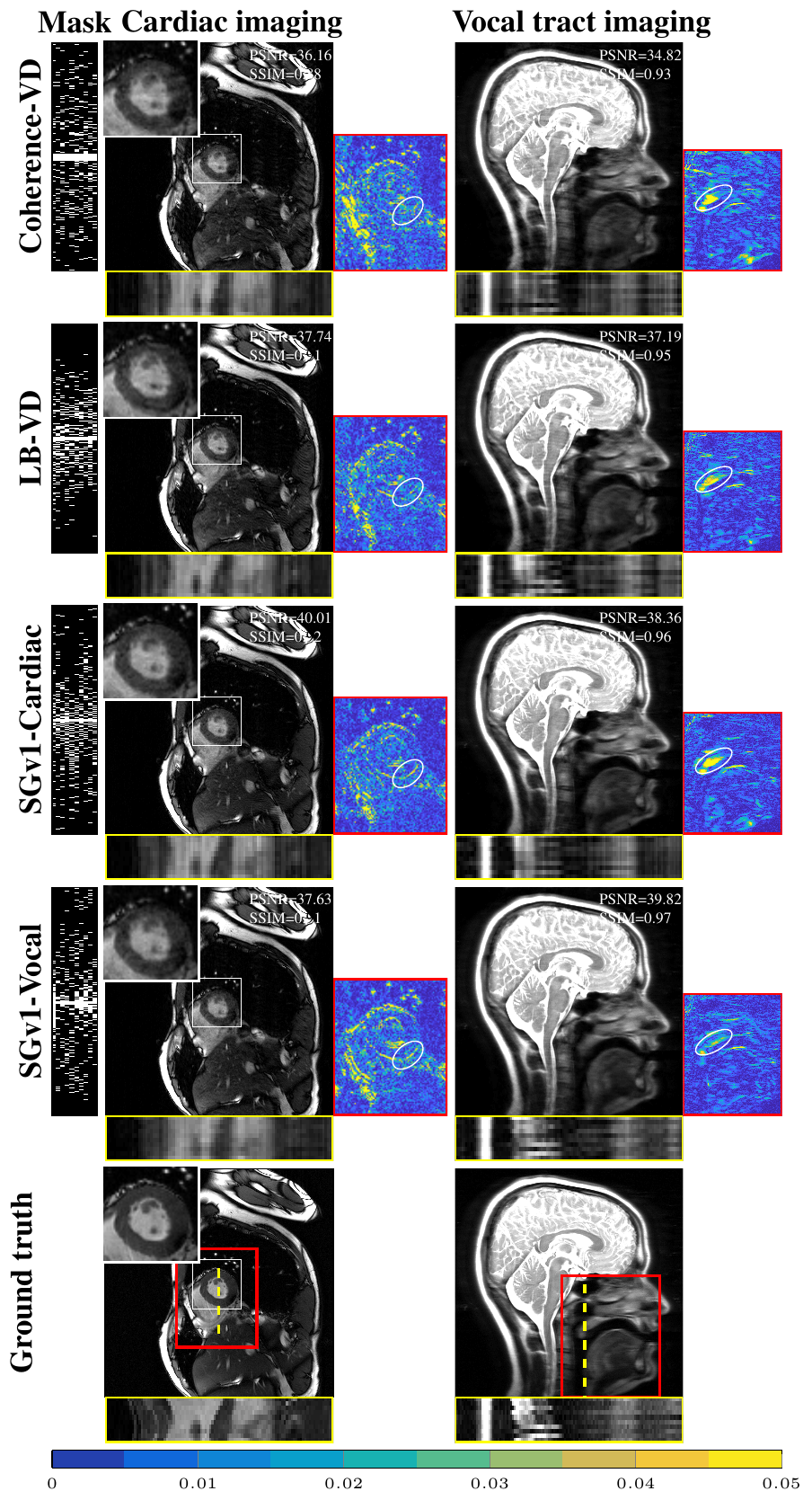}
\vspace{-.3cm}
\caption{Reconstruction for KTF at $15\%$ sampling for the cardiac and vocal anatomies of size 256$\times$256$\times$10. Figures showing different frames for the vocal and cardiac images are available in Figures \ref{fig:cardiac} and \ref{fig:vocal}. }\label{fig:plot_cross_anat}
\vspace{-.3cm}
\end{figure}

In these last experiments, we consider both the single coil cardiac dataset as well as the vocal imaging dataset both of size  $256\times 256 \times 10$. The cardiac dataset was trained on $5$ samples and tested on $2$, using only the first ten frames of each scan, whereas the vocal one used $2$ training samples and $2$ testing samples. In this setup, the k-space of the cardiac dataset tends to vary more from one sample to another than the vocal one, making the generalization of the mask more complicated. This issue would require more training samples, but imposing SG-v1 algorithm to start with $4$ central phase encoding lines on each frame was found to be sufficient to acquire the peaks in the k-space across the whole dataset.  \textit{SGv1-Cardiac} refers to the greedy algorithm using cardiac data, and \textit{SGv1-Vocal} is its vocal counterpart. The algorithm used a batch of size $k = 64$ at each iteration, and the results were obtained using only KTF.

The results are reported on the Figures \ref{fig:psnr_anat} and \ref{fig:plot_cross_anat}, and we see that, for the both datasets, the greedy approach provides superior results against VD sampling methods across all sampling rates. It is striking that, in this setting, the SG-v1 approach outperforms even more convincingly all the baselines, and the LB-VD approach, in this case, is outperformed by more than $2$dB by SG-v1, where it remained very competitive in the other settings. This difference is clear in the temporal fidelity of both reconstructions on Figure \ref{fig:plot_cross_anat}, where we see that the LB-VD approach loses sharpness and accuracy compared to SG-v1.

\vspace{-.5cm}
\subsection{Comparison across anatomies}\label{sec:anatomies}

The main complication coming from applying the masks across anatomies is that the form of the k-space might vary heavily across datasets: the vocal spectrum is very sharply peaked, while the cardiac one is much broader. 
Comparing the cross-performances on Figures \ref{fig:plot_cross_anat}, we see that the and \textit{SGv1-vocal} masks generalizes much better on the cardiac datasets than the other way around. This can be explained from the differences in the spectra: the cardiac one being more spread out, the cardiac mask less faithfully captures  the very low frequencies of the k-space, which are absolutely crucial to a successful reconstruction on the vocal dataset, thus hindering the reconstruction quality. Also, we see that it is important for the trained mask to be paired with its anatomy to obtain the best performance.

 \vspace{-.5cm}
\subsection{Additional visual reconstructions for cardiac and vocal dataset}\label{app:visual}

The present appendix provides further results for experiments \ref{sec:large} and \ref{sec:anatomies}. We show in Figures \ref{fig:cardiac} and \ref{fig:vocal} reconstruction at different frames which provide clearer visual information to the quality of reconstruction compared to the temporal profiles.
\begin{figure*}[!h]
\vspace{-.5cm}
\centering
\begin{minipage}[c]{.65\linewidth}
	\vspace{0pt}
	\includegraphics[width=1.1\linewidth]{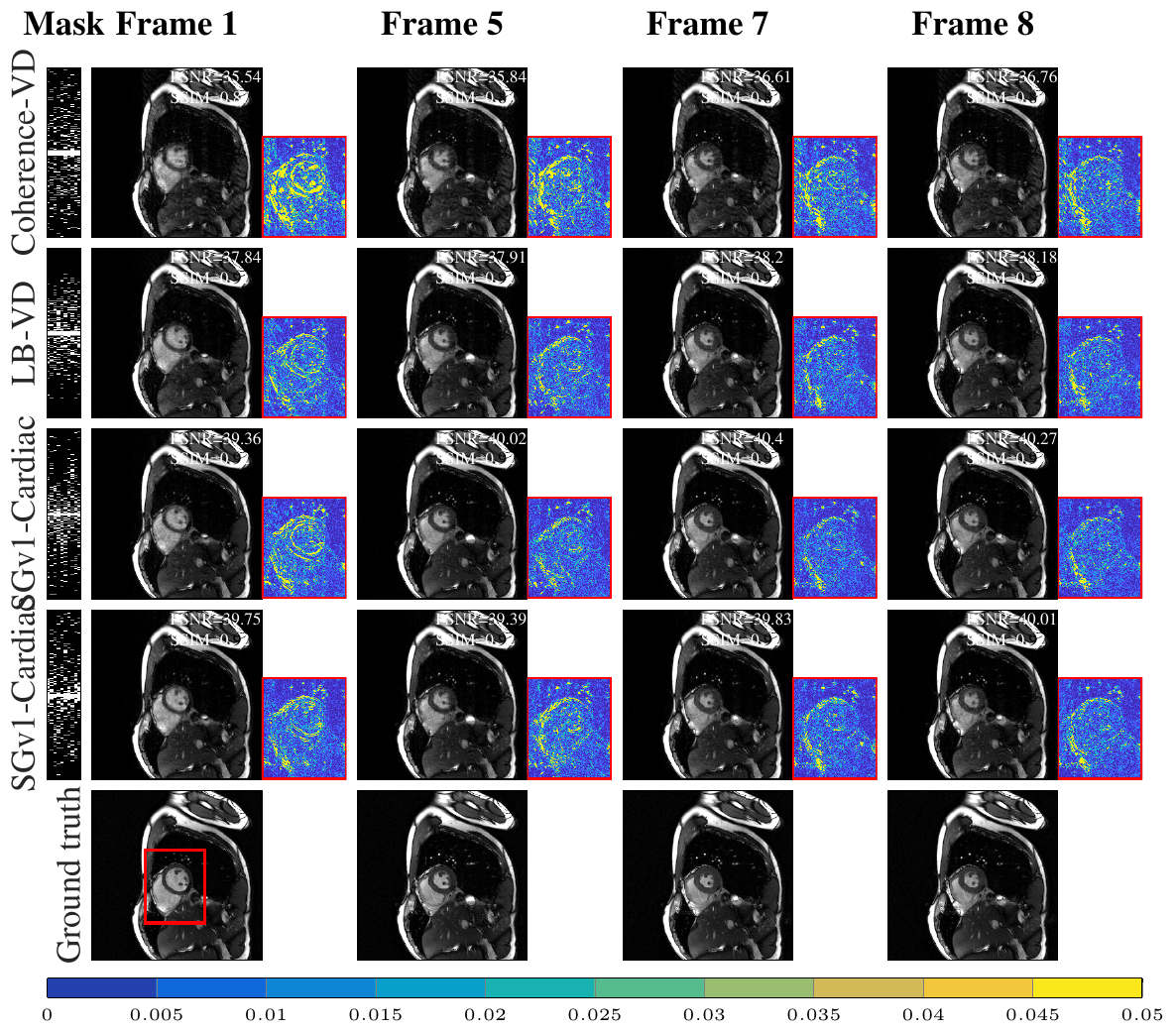}
\end{minipage}\hfill
\begin{minipage}[c]{.25\linewidth}
	\centering \caption{Reconstruction with KTF \cite{jung2009k} at $15\%$ sampling rate for the cardiac anatomy of size 256$\times$256$\times$10. It unfolds the temporal profile of Figure \ref{fig:plot_cross_anat}. The PSNR and SSIM displayed are computed for a each image individually, and the overall PSNR for each image is the one of Figure \ref{fig:plot_cross_anat}. The ground truth is added at the end of each line for comparison.}\label{fig:cardiac}
\end{minipage}
\begin{minipage}[c]{.65\linewidth}
	\vspace{0pt}
	\includegraphics[width=1.1\linewidth]{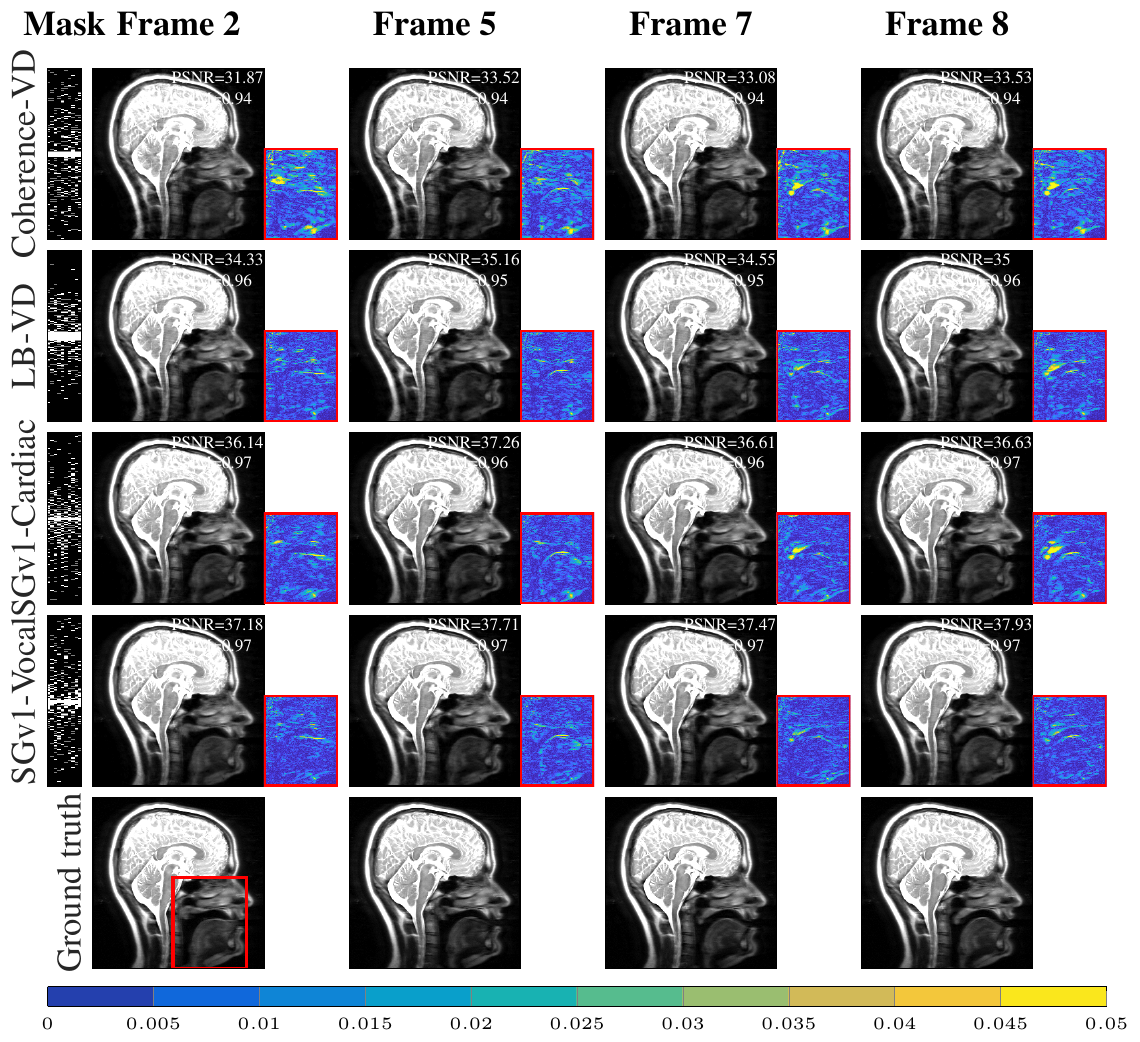}
\end{minipage}\hfill
\begin{minipage}[c]{.25\linewidth}
	\centering \caption{Reconstruction with KTF at $15\%$  \cite{jung2009k} sampling rate for the vocal anatomy of size 256$\times$256$\times$10. It unfolds the temporal profile of Figure \ref{fig:plot_cross_anat}. The PSNR and SSIM displayed are computed for a each image individually, and the overall PSNR for each image is the one of Figure \ref{fig:plot_cross_anat}. The ground truth is added at the end of each line for comparison. }\label{fig:vocal}
\end{minipage}
	\vspace{-.5cm}

\end{figure*}

For these images, the PSNR and SSIM are computed with respect to each individual frame, showing the quality of the reconstruction in a much more detailed fashion than before, where we considered each dynamic scan as a whole. Generally, we as previously observed, the mask trained for a specific anatomy will most faithfully capture the sharp contrast transitions in the dynamic regions of the images. For the vocal images, we see that sampling the first frame more heavily is important in order to avoid having a very large PSNR discrepancy, as observed for the other masks.  The PSNR remains quite stable across the frames otherwise.

\newpage
\subsection{Noisy experiments}\vspace{-.2cm}

In order to test the robustness of our framework to noise, we artificially added bivariate circularly symmetry complex random Gaussian noise to the normalized complex images, with a standard deviation $\sigma = 0.05$ for both the real and imaginary components. We then tested to see whether the greedy framework is able to adapt to the level of noise by prescribing a different sampling pattern than in the previous experiments.

  We chose to use V-BM4D \citeapndx{maggioni2012video} as denoiser with its default suggested mode using Wiener filtering and low-complexity profile, and provided the algorithm the standard deviation of the noise as the denoising parameter. The comparison between the fully sampled denoised images and the original ones yields an average PSNR of $24.95$ dB across the whole dataset. Due to the fact that none of the reconstruction algorithms that we used have a denoising parameter incorporated, we simply apply the V-BM4D respectively to the real and the imaginary parts of the result of the reconstruction. The results that we obtain are presented on the Figures \ref{fig:psnr_noisy} and \ref{fig:plot_noisy}. 

It is interesting to notice on Figure \ref{fig:plot_noisy} that the learning-based framework outperforms the baselines that are not learning-based by a larger margin than in the noiseless case, and this is again especially true at low sampling rates. In this case however, the difference between SG-v1 and LB-VD methods is much smaller, and this might be explained by the fact that noise corrupts the high frequency samples, and thus the masks concentrate more around low-frequencies, leaving less room for designs that largely differ.

We see a clear adaptation of the resulting learning based mask, as shown by comparing Figures \ref{fig:plot_cross} and \ref{fig:plot_noisy}: the masks SGv1-KTF and SGv1-ALOHA, which are trained on the noisy data, are closer to low-pass masks, due to the high-frequency details being lost to noise, and hence, no very high frequency samples are added to the mask. 

 Also, notice than even if the discrepancy in PSNR is only around $0.8-1$ dB between the golden ratio sampling and the optimized one, the temporal details are much more faithfully preserved by the learning-based approach, which is crucial in dynamic applications. The inadequacy of coherence-based sampling is highlighted in this case, as very little temporal information is captured in the reconstruction with both decoders. Also, for both decoders, there is a clear improvement on the preservation of the temporal profile when using learning-based masks compared to the baselines; the improvement of the SGv1-ALOHA mask of around $3$dB also shows how well our framework is able to adapt to this noisy situation, whereas Coherence-VD yields results of unacceptable quality.

\begin{figure}[!t]
\centering
\includegraphics[width=.9\linewidth]{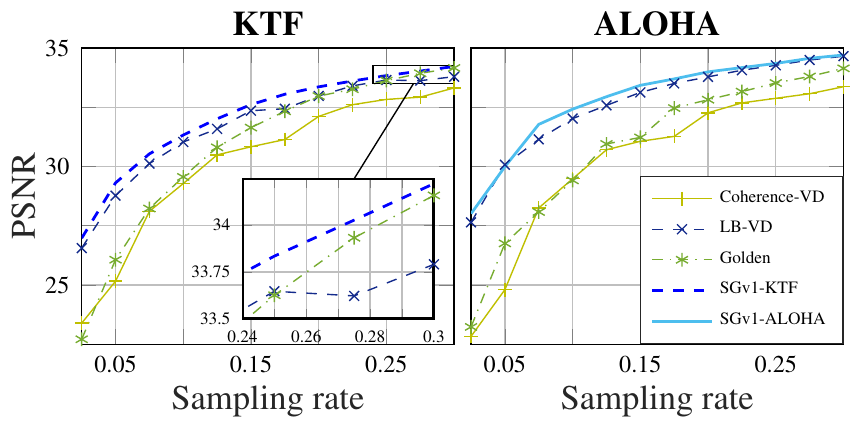}
\vspace{-.4cm}
\caption{PSNR as a function of sampling rate for both reconstruction algorithms considered, comparing SG-v1 with the three baselines, averaged on $4$ noisy testing images of size 152$\times$152$\times$17. The PSNR is computed between the denoised reconstructed image and the original (not noisy) ground truth.}\label{fig:psnr_noisy}

\includegraphics[width=.9\linewidth]{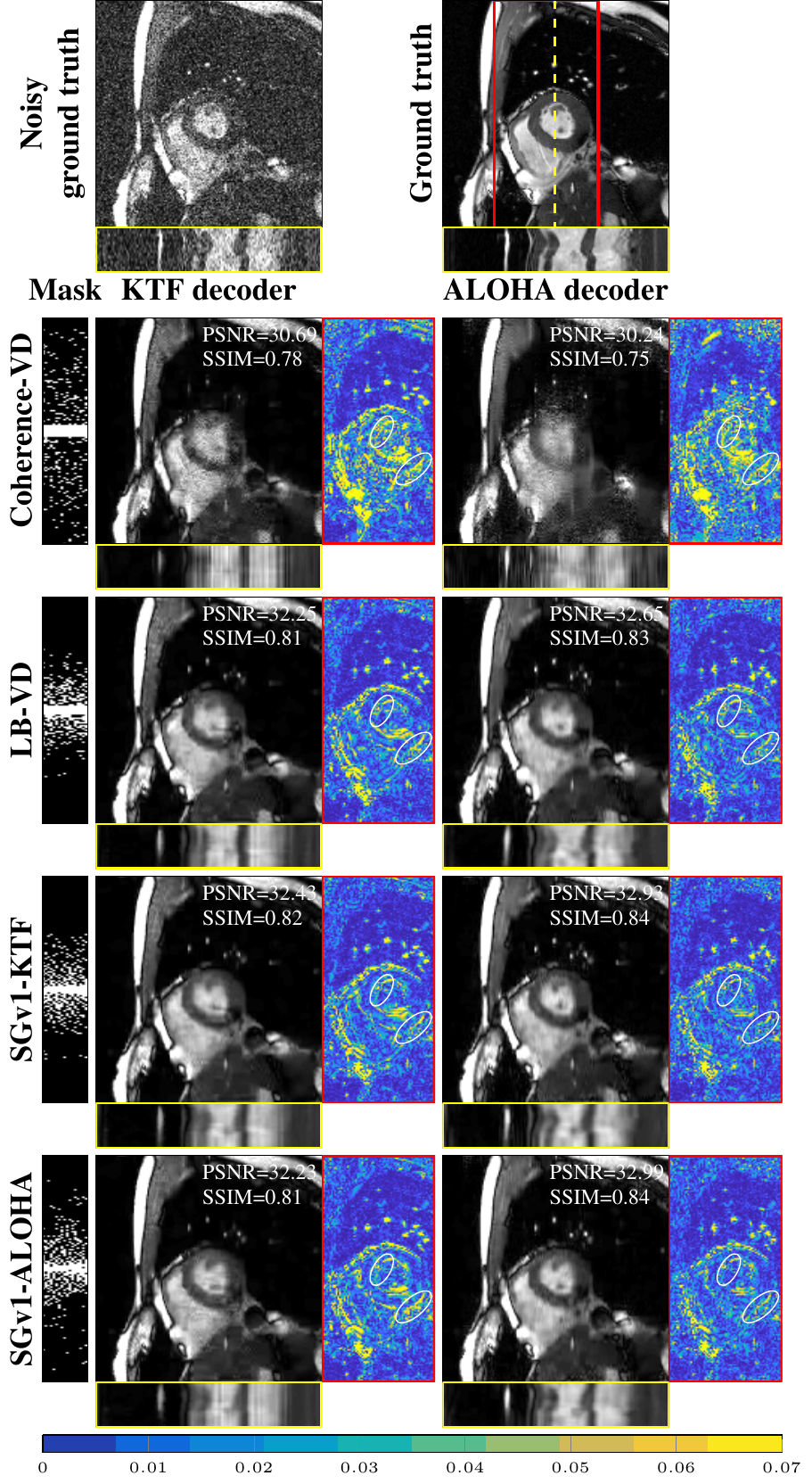}
\vspace{-.4cm}
\caption{Reconstructed denoised version from the noisy ground truth on the first line, at $15\%$ sampling. The PSNR is computed with respect to the original ground truth on the top right.}\label{fig:plot_noisy}
\end{figure}
\newpage

\bibliographystyleapndx{IEEEtran}
\bibliographyapndx{biblio}
\end{document}